\newenvironment{customTheorem}[1]
  {\innercustomTheorem}
  {\endinnercustomTheorem}
\newtheorem{remark}{Remark}[section]
\newtheorem{proposition}{Proposition}[section]
\newtheorem*{claim*}{Claim}
\newtheorem*{rough version}{Rough Version}
\newtheorem*{theorem*}{Theorem}
\newtheorem*{corollary*}{Corollary}
\newenvironment{sketch proof}{\proof}{\endproof}
\numberwithin{equation}{section}
\title[ local integrated decay estimates for spherically symmetric black holes]{{a note on integrated local energy decay estimates \\ for spherically symmetric black hole spacetimes}}
\author{Gustav Holzegel} 
\address{Imperial College London, Department of Mathematics,
South Kensington Campus, London SW7 2AZ, UK}
\email{g.holzegel@imperial.ac.uk}
\address{Universität Münster,
Mathematisches Institut, Einsteinstrasse 62, 48149 Münster, Germany}
\email{gholzege@uni-muenster.de}
\author{Georgios Mavrogiannis}
\address{Rutgers University, Department of Mathematics, New Brunswick, NJ 08903 USA}
\email{gm758@math.rutgers.edu}
\author{Renato Velozo Ruiz}
\address{University of Toronto, Department of Mathematics, 40 St. George Street, Toronto, ON, Canada}
\email{renato.velozo.ruiz@utoronto.ca}
\date\today
\begin{document}

\begin{abstract}
We present short proofs of integrated local energy decay estimates on Schwarzschild, extremal Reissner--Nordstr\"om,  and Schwarzschild--de~Sitter spacetimes. The proofs employ novel global physical space multipliers, which, besides their remarkable simplicity, (a) are directly derivable from the geodesic flow, (b) do not require decomposition into spherical harmonics, and (c) whose boundary terms can be controlled by the conserved $T$-energy alone. We also elaborate on the intimate connection between the multipliers of the present paper and the globally good commutators introduced in~\cite{gustav,mavrogiannis}. 
\end{abstract}

\keywords{black holes, wave equation, local integrated decay estimates}
\subjclass{58J45, 35L10, 83C05, 83C57}

\maketitle

{
  \hypersetup{linkcolor=black}
  %\tableofcontents
}

\section{Introduction}\label{sec: intro}

Integrated local energy decay estimates for hyperbolic equations on the exterior of black hole backgrounds have been studied intensely in the past two decades, most often in connection with the stability problem for black holes \cite{DR4, BlueSoffer1, AnderssonBlue, tatarutohaneanuKerr, DR2, Aretakis, SRTdC2}. The importance of such estimates lies in their global character, as they necessarily capture the geometry of the entire spacetime, including the well-known phenomenon of trapped geodesics and its associated degeneration in the estimate. Moreover, once such an estimate has been established, it can be combined with well-known estimates near the horizon (going back to \cite{DR4}) and near the asymptotically flat end \cite{DR6} to prove inverse polynomial decay estimates and more refined decay statements including, with additional work, sharp decay rates \cite{AAGsph, AAGextRN, Hin, AAGkerr}. 

In this paper, we revisit some of the simplest settings where an integrated local energy decay estimate can be proven:  The covariant wave equation $\Box_{g} \psi =0$ on $(\mathcal{M},g)$ the exterior of the Schwarzschild, the extremal Reissner--Nordström, and the Schwarzschild--de Sitter black hole geometry. Note that being static, these examples admit a coercive energy conservation law from the timelike Killing field. Our main motivation here is not so much to produce new results, although some of the estimates we state are new,  but rather to 
\begin{enumerate}
	\item  Streamline existing proofs by providing simple global multipliers leading to very short proofs. 
	\item Derive the exact form of these multipliers from global considerations regarding geodesic flow.
\end{enumerate}

For the Schwarzschild geometry, the proof of such estimates goes back to~\cite{DR4} and~\cite{BlueSoffer1}, who provided the first, rather elaborated, constructions of multipliers leading to integrated local energy decay estimates. However, in both works, the multipliers depended on the angular momentum number $\ell$ (of a decomposition of the solution into spherical harmonics), which complicated direct non-linear applications. This drawback was resolved in~\cite{DR7} at the cost of commuting with angular momentum operators, i.e.~invoking a higher order energy. Finally,~\cite{MMTT-stricharz} provided a construction of a single multiplier that worked for all frequencies without any commutation. Their construction relied on the redshift vector field of~\cite{DR4}: a small amount of it had to be added in the construction to compensate for terms of the wrong sign near the horizon. In particular, the estimate of~\cite{MMTT-stricharz} does not provide an integrated local energy that can be controlled by the $T$-energy arising from the Killing field. In contrast, our Theorem \ref{theorem 1} below controls an integrated decay norm in terms of the uncommuted initial $T$-energy alone. Remarkably, its proof only invokes multiplying the covariant wave equation (\ref{renow}) by $X\psi$ where $X$ is (expressed in standard Schwarzschild coordinates)
\begin{align} \label{XinS}
X= \left(1-\frac{2M}{r}\right) f(r) \partial_r + 2\left(1-\frac{2M}{r}\right) (\partial_r f) (r),  \ \ \ \textrm{where} \ f(r)= \left(1-\frac{3M}{r}\right) \sqrt{1+\frac{6M}{r}}.
\end{align}
While the degenerations at the horizon and the trapped set in (\ref{XinS}) are well-known, the factor of $\sqrt{1+\frac{6M}{r}}$ may look rather artificial. However, the exact form of $f$ is obtained when writing down the radial momentum $p^r$ (as a function of $r$) corresponding to a future-trapped null geodesic, i.e.~a null geodesic approaching $r=3M$ asymptotically.  We note that the form of $f$ also appeared in the commutator vector fields introduced in~\cite{gustav,mavrogiannis}. We elaborate on this connection in Section \ref{sec:nohyp} below. 

The unifying principle of ``deriving” the global form of the multiplier from the parametrisation of trapped geodesics in phase space is at the heart of this paper. It is remarkable and still slightly mysterious to us that the multipliers thus obtained provide direct and simple proofs of integrated local energy decay estimates in many classical cases: The Schwarzschild geometry in higher dimensions (here our Theorem \ref{theorem 1.1}, which concerns $n=4+1$ dimensions, considerably simplifies the constructions in \cite{LaulMetcalfe,volker-HighSchwarzschild}), the extremal Reissner--Nordström geometry (here our Theorem \ref{theorem 3} simplifies the construction in \cite{Aretakis}) and the Schwarzschild--de Sitter geometry (our Theorem \ref{theorem 2} below, which concerns the conformal wave equation).

It is of course a natural question whether the above procedure to construct multipliers leads to novel microlocal multipliers and a simplification of~\cite{DR2} in the Kerr case. Unfortunately, the constructions and the computations become significantly more involved. In order not to blur the simplicity of the construction in the basic examples, we postpone an investigation of this to the future.

{\bf Acknowledgments.}
G.H.~acknowledges support by the Alexander von Humboldt Foundation in the framework of the Alexander von Humboldt Professorship endowed by the Federal Ministry of Education and Research, ERC Consolidator Grant 772249 and funding through Germany’s Excellence Strategy EXC 2044 390685587, Mathematics Münster: Dynamics--Geometry--Structure. G.M. would like to thank Hans Lindblad for insightful conversations.

\section{Preliminaries}

\subsection{The spacetimes} \label{sec:spacetime}
We study the (static, spherically symmetric) black hole exteriors corresponding to the Schwarzschild, the extremal Reissner--Nordstr\"om and the Schwarzschild--de Sitter metric. We let $(\mathcal{M},g)$ be an $n$-dimensional Lorentzian manifold $(\mathcal{M},g)$ with $\mathcal{M}=\mathbb{R}_t \times (\rho_1,\rho_2)_r \times \mathbb{S}^{n-2}$ and
\begin{equation} \label{gform}
	g= -\xi(r) dt\otimes dt +(\xi(r))^{-1}dr\otimes dr +r^2 d\sigma_{\mathbb{S}^{n-2}} \, , 
\end{equation} 
where $\xi : (\rho_1,\rho_2) \rightarrow \mathbb{R}^+$ is smooth, and $d\sigma_{\mathbb{S}^{n-2}}$ denotes the round metric on the unit sphere $\mathbb{S}^{n-2}$.

Specifically, given $M>0$, we define

\begin{enumerate}
\item $\mathrm{Schw}^{1+3}:=\mathbb{R}_t\times (2M,\infty)_r\times \mathbb{S}^2$ with $\xi(r) = 1-\frac{2M}{r}$ and $n=4$ in (\ref{gform}), 
\item $\mathrm{Schw^{1+4}}:=\mathbb{R}_t\times (\sqrt{2M},\infty)_r\times \mathbb{S}^3$ with $\xi(r)=1-\frac{{2M}}{r^{2}}$ and $n=5$ in (\ref{gform}), 
\item $\mathrm{ERN}^{1+3}:=\mathbb{R}_t\times (M,\infty)_r\times \mathbb{S}^2$ with $\xi(r)=(1-\frac{M}{r})^2$ and $n=4$ in (\ref{gform}). 
\end{enumerate}
Given also $L>0$ and~$M>0$ that satisfy the subextremality condition~$0<\frac{M^2}{L^2}<\frac{1}{27}$ and letting $\bar{r}_+>r_+>0$ denote the roots of $\xi(r)=1- \frac{2M}{r}-\frac{r^2}{L^2}$:
\begin{enumerate}
\setcounter{enumi}{3}
\item $\mathrm{SchwdS^{1+3}}:=\mathbb{R}_t\times (r_+,\bar{r}_+)_r\times \mathbb{S}^2$ with $\xi(r)=1- \frac{2M}{r}-\frac{r^2}{L^2}$ and $n=4$ in (\ref{gform}).
\end{enumerate}
We will often employ the well-known tortoise coordinate $r^\star \in (-\infty,\infty)$ defined through $\frac{dr^\star}{dr} = \frac{1}{\xi(r)}$. 

Finally, note that the natural spacetime volume form associated with the metrics (\ref{gform}) can be written as~$dvol_{\mathcal{M}} = \xi r^{n-2} dt\wedge dr^\star\wedge dvol_{\mathbb{S}^{n-2}}$, where $ dvol_{\mathbb{S}^{n-2}}$ denotes the standard volume form on~$\mathbb{S}^{n-2}$. 

\subsection{The spacetime foliations} \label{sec:foliation}

For each of the spacetimes of Section~\ref{sec:spacetime} above we consider in addition a foliation by spherically symmetric, smooth spacelike slices $\Sigma_\tau$, which are defined as the push-forward~$\Sigma_\tau= \varphi_{\tau}(\Sigma_0)$ associated with the integral curves of the Killing vector field~$\partial_t$, where for $(1)$--$(3)$ the leaf~$\Sigma_0$ is chosen to connect the future event horizon with future null infinity, and for $(4)$ the leaf $\Sigma_0$ is chosen to connect the future event horizon and the future cosmological horizon. The precise form of the slices is inessential in what follows; moreover, in the cases $(1)$--$(3)$ all of our arguments would equally work for slices ending at spatial infinity. The region between two slices $\Sigma_{\tau_1}$, $\Sigma_{\tau_2}$ will be denoted $\mathcal{M}\left(\tau_1,\tau_2\right)$.

\subsection{The covariant wave equation} \label{sec:covariantwave}
We will consider smooth solutions to the covariant wave equation 
\begin{align} \label{kgwave}
\Box_g \phi - \mu \phi= 0 \, 
\end{align} 
in the following four cases:
\begin{itemize}
\item $\mu = 0$ in (\ref{kgwave}) and $(\mathcal{M},g)$ being one of the spacetimes in $(1)$--$(3)$.
\item $\mu =\frac{2}{L^2}$  in (\ref{kgwave}) and $(\mathcal{M},g)$ being the spacetime $(4)$. 
\end{itemize}
The last case is the conformal wave equation, the natural analogue of the ``massless" case for non-vanishing cosmological constant. 
Setting $\psi = \phi r^{\frac{n-2}{2}}$, we can write (\ref{kgwave}) as
\begin{align} \label{renow}
-\partial_t^2 \psi + \partial_{r^\star}^2 \psi + \frac{\xi(r)}{r^2} \Delta_{\mathbb{S}^{n-2}} \psi -V(r) \psi - \mu \xi(r) \psi= 0 \ ,  \ \ \ \textrm{where} \ V(r) = r^{-\frac{n-2}{2}} \left( r^{\frac{n-2}{2}}\right)^{\prime \prime}.
\end{align}
In the above, a prime denotes differentiation with respect to $r^\star$ and $\Delta_{\mathbb{S}^{n-2}}$ is the spherical Laplacian on the unit sphere. It will often be algebraically simpler to work with (\ref{renow}) directly instead of (\ref{kgwave}). Note that in all four cases considered there is a $c$ depending only on the black hole parameters such that
\begin{align} \label{Vpos}
V+\mu \xi \geq \frac{c}{r^3}\xi \ \ \ \textrm{holds for $r \in (\rho_1,\rho_2)$.} 
\end{align}

\subsection{The basic energy currents} 
Using notation of \cite{dafermos2022quasilinear}, Section 3.3, we define from a tuple $(X,w,q,\varpi)$ consisting of a vectorfield $X$, a function $w$, a one-form $q$ and an $(n-2)$-form $\varpi$ on $\mathcal{M}$, the currents 
\begin{align} 
J_\mu^{X, w, q, \varpi} [\phi] &:= T_{\mu \nu} [\phi] X^\nu + w \phi \partial_\mu \phi + q_\mu \phi^2 + \star d(\phi^2 \varpi)_\mu  \, , \\
K^{X,w,q} [\phi] &:= T^{\mu \nu} \pi^X_{\mu \nu} + \nabla^\mu w \phi \nabla_\mu \phi + w \left( \nabla^\mu \phi \nabla_\mu \phi + \mu \phi^2\right) + \nabla^\mu q_\mu \phi^2 + 2 \phi q_\mu g^{\mu \nu} \partial_\nu \phi \, ,
\end{align}
where 
\begin{align}
T_{\mu \nu}[\phi] = \partial_\mu \phi \partial_\nu \phi - \frac{1}{2} g_{\mu \nu} \left( g^{\alpha \beta} \partial_\alpha \phi \partial_\beta \phi+ \mu \phi^2 \right)  \, ,  \ \ \ \ \ \ \pi_{\mu \nu}^X = \frac{1}{2} (\mathcal{L}_X g)_{\mu \nu} = \frac{1}{2} \left(\nabla_\mu X_\nu +\nabla_\nu X_\mu \right) .
\end{align}
One easily checks that for $\phi$ satisfying (\ref{kgwave}), the above currents are related by the divergence identity
\begin{align} \label{divid}
\nabla^\mu J_\mu^{X, w, q, \varpi} [\phi] = K^{X,w,q} [\phi] \, .
\end{align}
Below we will consider two currents $J^{(1)}=J^{X_1,w_1,q_1,\varpi_1}[\phi]$ and $J^{(2)}=J^{X_2,w_2,q_2,\varpi_2}[\phi]$. 

The current $J^{(1)}$ arises from the static Killing field $T=\partial_t$ and is defined by 
\begin{align} \label{Tdata}
X_1 = \partial_t \ \ \ , \ \ \ w_1= 0 \ \ \ , \ \ \ q_1= 0 \ \ \ , \ \ \  \varpi_1 =  \left(-1\right)^{n+1} \frac{n-2}{4} \frac{\xi}{r} r^{n-2} d\textrm{vol}_{\mathbb{S}^{n-2}} \, .
\end{align}
Clearly $K^{X_1,w_1,q_1}[\phi]=0$ and one computes the components (expressed in terms of $\psi=r^{\frac{n-2}{2}} \phi$)
\begin{align}
J^{(1)}_t  &= \frac{1}{2} \frac{1}{r^{n-2}} \left((\partial_t \psi)^2 + (\partial_{r^\star} \psi)^2 +\frac{\xi(r)}{r^2} |\nabla_{\mathbb{S}^{n-2}} \psi|^2 +\left(V(r) + \mu \xi(r) \right)\psi^2\right) \nonumber \, ,  \\  J^{(1)}_{r^\star} &=  \frac{1}{r^{n-2}}  \left(\partial_t \psi \partial_{r^\star} \psi\right) \, , 
\label{Tcomponents} \\ J^{(1)}_A &=  \frac{1}{r^{n-2}}  \left(\partial_t \psi \partial_{A} \psi\right) . \nonumber 
\end{align}

\begin{remark}
The addition of the non-trivial $(n-2)$-form in (\ref{Tdata}) produces the additional coercive zeroth order term $V\psi^2$ in the current $J^{(1)}_t$ when written in terms of $\psi$ instead of $\phi$, which is convenient in the analysis. For $\varpi_1=0$ one would obtain the usual $J^{(1)}_t = \frac{1}{2} \left((\partial_t \phi)^2 + (\partial_{r^\star} \phi)^2 +\frac{\xi(r)}{r^2} |\nabla_{\mathbb{S}^{n-2}} \phi|^2 +\mu \xi(r)\phi^2\right)$.
\end{remark}

The current $J^{(2)}=J^{X_2,w_2,q_2,\varpi_2}$ is the so-called Morawetz current, defined by 
\begin{align} \label{mocu}
X_2 = f\partial_{r^\star} \ \ \ , \ \ \ w_2=\frac{n-2}{2} \frac{f}{r}\xi + \frac{1}{2}f^\prime \ \ \ , \ \ \ (q_2)_\mu = -\frac{1}{2} \partial_\mu w_2 + \frac{n-2}{2} \frac{f^\prime}{r} \partial_\mu r \ \ \ , \ \ \  \varpi_2 =  0 \, ,
\end{align}
for a smooth radial function $f : (\rho_1,\rho_2) \rightarrow \mathbb{R}$ to be determined in (\ref{fde}). We compute its components
\begin{align}
J^{(2)}_t &=  \frac{1}{r^{n-2}} \left(f \partial_t \psi \partial_{r^\star} \psi + \frac{1}{2} f^\prime \partial_t \psi  \cdot \psi  \right)\, , \nonumber  \\  J^{(2)}_{r^\star} &=  \frac{1}{2} \frac{1}{r^{n-2}}  \left(f(\partial_{r^\star} \psi)^2 + f(\partial_{t} \psi)^2  - f\frac{\xi(r)}{r^2} |\nabla_{\mathbb{S}^{n-2}} \psi|^2 -(V+\mu \xi(r)) f \psi^2 + f^\prime \partial_{r^\star} \psi \cdot \psi -\frac{f^{\prime \prime} \psi^2}{2}\right) \, ,  \label{Mcomponents}
\\ J^{(2)}_A &=    \frac{1}{r^{n-2}}  \left(\partial_{r^\star} \psi \partial_{A} \psi\right) \, ,  \nonumber 
\end{align}
and the spacetime term $K^{(2)}=K^{X_2,w_2,q_2}[\phi]$:
\begin{align}
K^{(2)}= \frac{1}{r^{n-2}\xi} \left[ f^\prime  (\partial_{r^\star} \psi)^2  - \frac{f}{2}  \left( \frac{\xi(r)}{r^2}\right)^\prime |\nabla_{\mathbb{S}^{n-2}} \psi|^2- \left(\frac{1}{4} f^{\prime \prime \prime} + \frac{1}{2}(V+\mu \xi)^\prime f\right) \psi^2  \right] \, . \nonumber
\end{align}

We finally remark that the divergence identities associated with $J^{(1)}$ and $J^{(2)}$ can also be obtained directly ``by hand” if ones multiplies (\ref{renow}) by $\partial_t \psi$ and by $-f^\prime \psi -2 f\psi^\prime$ respectively.

\subsection{The basic energy}
We define the basic energy on the spacelike slices $\Sigma_\tau$ introduced in Section \ref{sec:foliation},
\begin{align} \label{ETdef}
\mathbb{E}^T[\psi](\tau):=\int_{\Sigma_\tau} J_\mu^{(1)} n^\mu_{\Sigma} \, .
\end{align}
Using (\ref{Vpos}) one finds that $\mathbb{E}^T[\psi](\tau)$ is coercive in each of the four cases considered, in fact we have 
\begin{align}
\mathbb{E}^T[\psi](\tau) \sim \int_{\Sigma_{\tau}} dr d\textrm{vol}_{\mathbb{S}^{n-2}}  \left[  (\partial_t \psi + \partial_{r^\star}\psi)^2  + \frac{ (\partial_t \psi - \partial_{r^\star}\psi)^2}{\xi(r) r^{1+\delta}}  +\frac{\xi(r)}{r^2} |\nabla_{\mathbb{S}^{n-2}} \psi|^2 +\frac{\xi}{r^3} \psi^2 \right] \, ,
\end{align}
where $0<\delta<1$ is determined from how the spacelike slices $\Sigma_\tau$ intersect null infinity and $\sim$ involves only constants depending on the black hole parameters. Applying the divergence identity (\ref{divid}) for the current $J^{(1)}_\mu$ yields for any $\tau_2 \geq \tau_1$ the estimate
\begin{align}
\mathbb{E}^T[\psi](\tau_2) \leq \mathbb{E}^T[\psi](\tau_1) \, .
\end{align}

\subsection{Deriving the multipliers from the geodesic flow} \label{sec:constructionprinciple}
Let $(\mathcal{M},g)$ be one of the spacetimes in $(1)$--$(4)$. The dynamics of a freely falling massless particle on $(\mathcal{M},g)$ takes place on the subset $\mathcal{P}$ of the tangent bundle $T\mathcal{M}$ given by
\begin{equation}\label{null_shell_relation}
	\mathcal{P}:=\Big\{(x,p)\in T\mathcal{M}: g_x(p,p) =0, \text{ where $p$ is future directed}\Big\}\, . 
\end{equation}
Here $(p^t, p^r, p^A)$ are the standard dual momentum coordinates in the tangent space. We consider the Hamiltonian flow set by the geodesic equations on $(\mathcal{M},g)$ given by 
\begin{equation*}
\frac{dx^{\alpha}}{ds}=p^{\alpha} ,\qquad \frac{dp^{\alpha}}{ds}=-\Gamma^{\alpha}_{\beta \gamma} (g)p^{\beta}p^{\gamma},
\end{equation*}
where $\Gamma^{\alpha}_{\beta \gamma} (g)$ denote the Christoffel symbols of $(\mathcal{M},g)$. The conserved quantities associated with staticity and spherical symmetry of the metric make the geodesic flow a \emph{completely integrable Hamiltonian flow in the sense of Liouville}. In particular, the conserved quantities can be exploited to reduce the radial geodesic motion to an ODE involving only the (conserved) \emph{particle energy} $E$ and the (conserved) \emph{total angular momentum} $l$ defined by 
 \begin{equation*}
E:=\xi \cdot p^t, \qquad l:=r\sqrt{g_{AB}p^A p^B} \, .
\end{equation*}
Indeed, by \eqref{null_shell_relation}, the particle energy $E$ can be written in terms of the radial coordinates by 
\begin{equation}\label{identity_particle_energy_angular_momentum}
	E^2=(p^r)^2+\mathcal{V}_l(r),\qquad \mathcal{V}_l(r):=\dfrac{l^2}{r^2}\xi,
\end{equation} 
where $\mathcal{V}_l(r)$ is the radial potential of the geodesic flow. Differentiating \eqref{identity_particle_energy_angular_momentum} along the flow, we obtain the radial geodesic equation 
\begin{equation}\label{radgeoeqn}
	\dfrac{dr}{ds}=p^r,\qquad \dfrac{dp^r}{ds}= \partial_r \mathcal{V}_l(r),
\end{equation} 
which defines the radial flow. Note that  for any of the spacetimes $(1)$--$(4)$, the radial potential $\mathcal{V}_l(r)$ has a unique critical point at $r=r_{trap}$ in $(\rho_1,\rho_2)$, which is seen to be a local maximum. It follows that $(r_{trap},0)$ is a hyperbolic fixed point for the radial flow which corresponds to the existence of null geodesics propagating in the photon sphere $\{r=r_{trap}\}$. We call \emph{trapped orbits} the geodesics contained in the so-called \emph{trapped set} 
\begin{equation} \label{Gammadef}
\Gamma:=\Big\{ (x,p)\in \mathcal{P}: r=r_{trap}, \quad p^r=0\Big\}.
\end{equation}
It can be shown that $\Gamma$ is \emph{normally hyperbolic}. See \cite{WZ11, DZ13, D15} for more information about the normal hyperbolicity of the trapped set in the Schwarzschild and Kerr black holes.

By the classical Hadamard--Perron theorem, the hyperbolic fixed point $(r_{trap},0)$ for the radial flow implies the existence of suitable stable and unstable manifolds in the $(r,p^r)$ plane. Equation \eqref{identity_particle_energy_angular_momentum} shows that along the stable manifolds, the conserved quantities must satisfy 
\begin{equation} \label{trarel}
\frac{l^2}{E^2}=\frac{r_{trap}^2}{\xi(r_{trap})}.
\end{equation}
This relation can be used to obtain an explicit characterisation of the stable and unstable manifolds associated to the radial flow. In phase space $\mathcal{P}$, the stable and unstable manifolds of the fixed point $(3M,0)$ correspond to the set of future-trapped and past-trapped geodesics, respectively. We say a geodesic is \emph{future-trapped} if $\lim_{s\to \infty }(r(s),p^r(s))=(r_{trap},0)$. Similarly, we say a geodesic is \emph{past-trapped} if $\lim_{s\to -\infty }(r(s),p^r(s))=(r_{trap},0)$. Using the relation \eqref{identity_particle_energy_angular_momentum}, we can write the conserved quantity $\frac{l^2}{E^2}$ as
\begin{equation}\label{quotient_conserved_quantities}
\frac{l^2}{E^2}-\frac{r_{trap}^2}{\xi(r_{trap})}=\frac{r^2}{\xi}\Big(1-\frac{\xi}{r^2}\frac{r_{trap}^2}{\xi(r_{trap})}\Big)-\frac{r^2}{\xi}\Big(\frac{p^r}{E}\Big)^2.
\end{equation}
It is now clear that for a geodesic to be future or past trapped, we need the relation (\ref{trarel}) to hold, so
\begin{align} \label{fde}
\frac{p^r}{E} = \pm \left( 1-\frac{\xi}{r^2}\frac{r_{trap}^2}{\xi(r_{trap})} \right)^\frac{1}{2} =: \pm f(r) \, .
\end{align}
The function $f$ defined by (\ref{fde}) is the $f$ to be used in the current (\ref{mocu}). Specifically, we obtain 
\begin{align}
f(r)&=\Big(1-\dfrac{3M}{r}\Big)\Big(1+\dfrac{6M}{r}\Big)^{\frac{1}{2}} \ & \ \ &\textrm{for Schwarzschild and $n=4$,}  \label{multipl_schw}
\\
f(r)&=\left(1-\frac{2 \sqrt{M}}{r}\right) \left(1+\frac{2 \sqrt{M}}{r}\right) \ & \ \ &\textrm{for Schwarzschild and $n=5$,} \label{multipl_schw5}
\\
f(r)&= \left(1-\frac{2 M}{r}\right)\Big(\frac{ r^2+4Mr -4 M^2}{ r^2}\Big)^{\frac{1}{2}} \ & \ \ &\textrm{for extremal Reissner--Nordstr\"om and $n=4$,} \label{multipl_ern}
\\
f (r)&= \frac{1}{\sqrt{1-27\frac{M^2}{L^2}}} \left(1-\frac{3M}{r}\right)\Big(1+\frac{6M}{r}\Big)^{\frac{1}{2}} \ &\ \ &\textrm{for Schwarzschild--de Sitter and $n=4$.} \label{multipl_sds}
\end{align} 
We abstain from introducing a further subscript to distinguish the different $f$’s as it will always be clear from the context which $f$ is to be used. 

We close this section by collecting an important monotonicity property: There exists a constant $c$ depending only on the black hole parameters such that
\begin{align} \label{fprimepos}
f^\prime \geq c \frac{\xi}{r^3}  \ \ \ \textrm{for (\ref{multipl_schw}), (\ref{multipl_schw5}), (\ref{multipl_sds}),}  \qquad \qquad  f^\prime \geq c  \left(1-\frac{M}{r}\right)\frac{\xi}{r^3} \ \ \ \textrm{for (\ref{multipl_ern}).} 
\end{align}

\section{The main results}\label{sec: main thms}

We recall the setting of Section \ref{sec:covariantwave} and the energy $ \mathbb{E}^T[\psi](\tau)$ defined in (\ref{ETdef}). We will state our Theorems for smooth solutions but by the usual density arguments the estimates also hold for appropriately defined weak solutions. When stating the results below we will use $A \lesssim B$ to mean $A \leq C B$ for a constant $C$ depending only on the black hole parameters.

\begin{customTheorem}{1}(Integrated decay estimate on Schwarzschild \cite{DR4})\label{theorem 1}
	Let~$\phi$ be a solution of the wave equation $\Box_g \phi = 0$ for $(\mathcal{M},g)$ the exterior of a Schwarzschild black hole~$\mathrm{Schw}^{1+3}$. Then $\psi = \phi r$ satisfies the estimate
	\begin{equation} \label{eq:onlyT}
				\int_{\mathcal{M}(0,\tau)} \frac{dvol_{\mathcal{M}}}{r^{2}} \left(\frac{1}{r^3}(\partial_{r^\star} \psi)^2+ 	\frac{1}{r^3}\left(1-\frac{3M}{r}\right)^2\left((\partial_t \psi)^2+|r\slashed{\nabla}\psi|^2\right) + \frac{\psi^2}{r^4}  \right)\lesssim  \mathbb{E}^T[\psi](0) \, .
	\end{equation}
\end{customTheorem}

Note that (\ref{eq:onlyT}) is equivalent to the estimate replacing $\psi$ by $\phi$ and $\frac{dvol_{\mathcal{M}}}{r^{2}}$ by $dvol_{\mathcal{M}}$. We state the estimate for $\psi$ simply because it is easiest to prove in this form. 
While the estimate (\ref{eq:onlyT}) has already been established in \cite{DR4}, we emphasise that our proof relies solely on  the simple multiplier (\ref{multipl_schw}), which is independent of the spherical harmonic number. The estimates of \cite{DR7, MMTT-stricharz} are obtained with multipliers independent of the spherical harmonic number but -- in comparison with (\ref{eq:onlyT}) -- either require higher order energies~\cite{DR7} or the current arising from a globally uniformly timelike vector field (instead of the Killing field $\partial_t$) on the right~(\cite{MMTT-stricharz}, see also~\cite{stogin}).

Theorem \ref{theorem 1} and its proof immediately generalise to the case of higher dimensional Schwarzschild metrics. We focus here on the $4+1$ dimensional case:

\begin{customTheorem}{2}\label{theorem 1.1}(Integrated decay estimate on 1+4 Schwarzschild)
	Let~$\phi$ be a solution to the wave equation $\Box_g \phi=0$ for $(\mathcal{M},g)$ the exterior of a~$1+4$ dimensional Schwarzschild black hole~$\mathrm{Schw}^{1+4}$. Then, $\psi=\phi r^\frac{3}{2}$ satisfies the estimate
	\begin{equation} \label{eq:hds}
		\int_{\mathcal{M}(0,\tau)} \frac{dvol_{\mathcal{M}}}{r^3} \left(\frac{1}{r^3}(\partial_{r^\star} \psi)^2+ 	\frac{1}{r^3}\left(1-\frac{2\sqrt{M}}{r}\right)^2\left((\partial_t \psi)^2+|r\slashed{\nabla}\psi|^2\right) + \frac{\psi^2}{r^4}  \right)\lesssim  \mathbb{E}^T[\psi](0).
	\end{equation}
\end{customTheorem}

The estimate (\ref{eq:hds}) in its exact form seems to be new. The paper \cite{LaulMetcalfe} obtains (\ref{eq:hds}) with the current arising from a uniformly timelike vector field on the right, while \cite{volker-HighSchwarzschild} requires higher order energies. 

Our next result addresses solutions of (\ref{kgwave}) on the extremal Reissner--Nordstr\"om geometry~\cite{Angelopoulos-Aretaki-Gajic,Giorgi-LinearstabilityofRN}.

\begin{customTheorem}{3}(Integrated decay estimate on extremal Reissner--Nordstr\"om)\label{theorem 3}
	Let~$\phi$ be a solution to the wave equation $\Box_g \phi =0$ the exterior of an extremal Reissner--Nordstr\"om black hole~$\mathrm{ERN}^{1+3}$. Then, $\psi = \phi r$ satisfies the estimate
	\begin{align} \label{erne}
		\int_{\mathcal{M}(0,\tau)} \frac{dvol_{\mathcal{M}}}{r^2} \left(1-\frac{M}{r}\right) \left(\frac{1}{r^3}(\partial_{r^\star} \psi)^2+ 	\frac{1}{r^3}\left(1-\frac{2{M}}{r}\right)^2\left((\partial_t \psi)^2+|r\slashed{\nabla}\psi|^2\right) + \frac{\psi^2}{r^4}  \right)\lesssim  \mathbb{E}^T[\psi](0).
		\end{align}
\end{customTheorem}

We note that the radial weights near infinity appearing in our main integrated estimates \eqref{eq:onlyT}--\eqref{erne} are non-optimal even if one insists on only using the $T$-energy $\mathbb{E}[\psi](0)$ on the right hand side, which one might want to use in scattering theory applications. It is well-known how to optimise them with a multiplier localised to infinity. As mentioned in the introduction,  if one is willing to use the redshift and weighted estimates near infinity, stronger estimates can be proven.

Our multiplier construction also produces an integrated local energy decay estimate for the conformal wave equation on the Schwarzschild--de Sitter exterior. In this case, there are two parameters involved and the resulting algebra is more involved. As a result, we only control the solution minus its spherical means. Note also that in this case, $r$-weights do not play any role as the exterior is compact in $r$.

\begin{customTheorem}{4}(Integrated decay estimate on Schwarzschild--de Sitter)\label{theorem 2}
	Let~$\phi$ be a solution to the conformal wave equation~$\Box\phi =\frac{2}{L^2}\phi$ on the exterior of a subextremal Schwarzschild--de~Sitter black hole~$\mathrm{SchwdS^{1+3}}$ which has vanishing spherical mean~$\frac{1}{vol(\mathbb{S}^{2})}\int d\sigma_{\mathbb{S}^{2}} \psi\equiv 0$. Then, $\psi = \phi r$ satisfies the estimate
	\begin{align} \label{sdst}
			\int_{\mathcal{M}(0,\tau)} dvol_{\mathcal{M}} \left((\partial_{r^\star} \psi)^2+ 	\left(1-\frac{3M}{r}\right)^2\left((\partial_t \psi)^2+|r\slashed{\nabla}\psi|^2\right) + \psi^2  \right)\lesssim  \mathbb{E}^T[\psi](0) \, .
\end{align}
\end{customTheorem}

\section{The proofs}\label{sec: proofs}

We first note that it is sufficient to prove the estimates (\ref{eq:onlyT})--(\ref{sdst}) without the $(\partial_t \psi)^2$-term on the left, which we will call the reduced estimate. Indeed, once the reduced estimate has been proven, a standard Lagrangian estimate recovers the missing derivative in terms of what has already been proven. 

To prove the reduced estimate, we will integrate over $\mathcal{M}(\tau_1,\tau_2)$ the divergence identity (\ref{divid}) associated with the the current $C_f \cdot J^{(1)} + J^{(2)}$ with  $f$ being the $f$ associated with the spacetime geometry under consideration and $C_f$ a constant to be fixed in Proposition \ref{lem:fproperties} below. We define
\begin{align}
\mathbb{E}^{aux}[\psi](\tau):= 2\int_{\Sigma_\tau} \left(C_f \cdot J_\mu^{(1)} + J_\mu^{(2)}\right)n^\mu_{\Sigma_{\tau}} \, ,
\end{align}
\begin{align} \label{idt}
\mathbb{I} \left[\psi, f \right] (\tau_1,\tau_2):= \int_{\mathcal{M}(\tau_1,\tau_2)} dt dr^\star d\sigma_{\mathbb{S}^{n-2}} \left[ 2f^\prime |\psi^{\prime}|^2  - f \left(\frac{\xi}{r^2}\right)^\prime |\nabla_{\mathbb{S}^{n-2}} \psi|^2 -\left(f (V+\mu\xi)^\prime +\frac{f^{\prime\prime\prime}}{2}\right)|\psi|^2\right].
\end{align}

We have the following statement:

\begin{proposition} \label{lem:fproperties}
Let $\phi$ be a solution of $\Box_g \phi= 0$ for $(\mathcal{M},g)$ one of the spacetimes (1)--(3) above or a solution of $\Box_g \phi - \frac{2}{L^2}\phi= 0$ for the spacetime (4). Let the radial function $f$ in the current $J^{(2)}$ satisfy 
\begin{align} \label{fconditions}
|f| + \bigg|\frac{r^2 f^\prime}{\xi}\bigg| + \bigg|\frac{r^3}{\xi} f^{\prime \prime}\bigg| \leq C \, , 
\end{align}
for some $C$ depending only on the black hole parameters. Then for any $\tau_2,\tau_1 \geq 0$
\begin{align} \label{figo}
\mathbb{I} \left[\psi, f \right] (\tau_1,\tau_2) \lesssim \mathbb{E}^T[\psi](\tau_1)  \, .
\end{align}
\end{proposition}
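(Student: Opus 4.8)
The plan is to integrate the divergence identity \eqref{divid} for the combined current $C_f J^{(1)} + J^{(2)}$ over the region $\mathcal{M}(\tau_1,\tau_2)$ and to use the divergence theorem to convert the bulk integral of $K^{(2)}$ (note $K^{(1)}=0$) into $\mathbb{I}[\psi,f]$ plus boundary terms on $\Sigma_{\tau_1}$ and $\Sigma_{\tau_2}$. Reading off $K^{(2)}$ from the expression displayed after \eqref{Mcomponents}, one has $\int_{\mathcal{M}(\tau_1,\tau_2)} K^{(2)}\, dvol_{\mathcal{M}}$; since $dvol_{\mathcal{M}} = \xi r^{n-2}\, dt\wedge dr^\star \wedge dvol_{\mathbb{S}^{n-2}}$, the factor $\tfrac{1}{r^{n-2}\xi}$ in $K^{(2)}$ cancels exactly against the volume form and produces precisely the integrand of $\mathbb{I}[\psi,f]$ in \eqref{idt} (the $-\tfrac12 f(\xi/r^2)' |\nabla\psi|^2$ term in $K^{(2)}$ pairs with the $2f'|\psi'|^2$ after accounting for the sign conventions, so one should double-check the factor of $2$ against the definition $\mathbb{E}^{aux} = 2\int(\dots)$). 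Thus the identity takes the schematic form
\begin{align*}
\mathbb{I}[\psi,f](\tau_1,\tau_2) = \mathbb{E}^{aux}[\psi](\tau_1) - \mathbb{E}^{aux}[\psi](\tau_2) + (\text{null-infinity / horizon flux terms}).
\end{align*}

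Next I would control the right-hand side. The flux terms through the horizon(s) and through null infinity: on the event horizon the Killing energy flux is non-negative, and the contribution of $J^{(2)}$ there should have a good sign or be dominated by it once $f$ vanishes appropriately at $r=r_{trap}$ — but in fact the cleaner route, as emphasised in the introduction, is to choose $C_f$ large enough (depending only on black hole parameters) so that the combined current $C_f J^{(1)} + J^{(2)}$ has all boundary fluxes controlled by the $T$-energy. Concretely, I would show that under the bound \eqref{fconditions} the density $\bigl(C_f J^{(1)}_\mu + J^{(2)}_\mu\bigr)n^\mu_\Sigma$ is pointwise bounded by a constant times $J^{(1)}_\mu n^\mu_\Sigma$: comparing \eqref{Tcomponents} with \eqref{Mcomponents}, every term in $J^{(2)}$ is either multiplied by $f$, $f'$, or $f''$, and \eqref{fconditions} says these are bounded by the weights appearing (with the right powers of $r$ and $\xi$) in the coercive form of $\mathbb{E}^T$ displayed after \eqref{ETdef}; a Cauchy–Schwarz / completing-the-square argument then gives $\bigl|(C_f J^{(1)}_\mu + J^{(2)}_\mu)n^\mu_\Sigma\bigr| \lesssim C_f\, J^{(1)}_\mu n^\mu_\Sigma$ pointwise once $C_f$ is fixed large. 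Hence $\mathbb{E}^{aux}[\psi](\tau) \lesssim \mathbb{E}^T[\psi](\tau) \le \mathbb{E}^T[\psi](\tau_1)$ by the $T$-energy monotonicity, and $\mathbb{E}^{aux}[\psi](\tau_1) \ge 0$ (or at least $\gtrsim -\mathbb{E}^T[\psi](\tau_1)$), and the infinity flux has the right sign or is again dominated. Combining, $\mathbb{I}[\psi,f](\tau_1,\tau_2) \lesssim \mathbb{E}^T[\psi](\tau_1)$, which is \eqref{figo}.

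The main obstacle I anticipate is the boundary-term bookkeeping rather than any single hard estimate: one must verify (i) that the null-infinity flux of $J^{(2)}$ is not merely bounded but has a sign — or is absorbable — despite the degenerate weight $r^{1+\delta}$ in $\mathbb{E}^T$; this is where the specific asymptotics $f \to \pm 1$ and $f' = O(\xi/r^3)$ from \eqref{fprimepos} enter, ensuring the outgoing/ingoing decomposition of $J^{(2)}$ at infinity is controlled by $(\partial_t\psi+\partial_{r^\star}\psi)^2$ plus a harmless multiple of the weighted $(\partial_t\psi-\partial_{r^\star}\psi)^2$; and (ii) the pointwise comparison on $\Sigma_\tau$, which requires that the cross terms $f\,\partial_t\psi\,\partial_{r^\star}\psi$ and $f'\psi\,\partial_t\psi$ in $J^{(2)}_t$, after writing $\partial_{r^\star} = \tfrac12(L - \underline L)$ in a null frame adapted to $\Sigma_\tau$, are absorbed using $|f|\le C$, $|f'|\lesssim \xi/r^3$, $|f''|\lesssim \xi/r^3$ together with the zeroth-order coercivity $\tfrac{\xi}{r^3}\psi^2$ already present in $\mathbb{E}^T$. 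Note that \eqref{fconditions} is stated precisely so that these three quantities have exactly the budget needed; so the proof of the Proposition should reduce to (a) the algebraic identity producing $\mathbb{I}$, (b) a one-line application of Cauchy–Schwarz on each slice, and (c) invoking $T$-energy monotonicity — I would present it in that order, relegating the sign of the infinity flux to a short remark or a direct computation using the null-frame components.
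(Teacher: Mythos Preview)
Your approach is essentially the paper's: apply the divergence identity for the combined current $C_f J^{(1)}+J^{(2)}$, choose $C_f$ large so that $J^{(2)}$ is pointwise dominated by $C_f J^{(1)}$ via Cauchy--Schwarz using \eqref{fconditions} and \eqref{Vpos}, and convert boundary terms to $\mathbb{E}^T$. Two minor points where the paper is cleaner than your plan: (i) the paper establishes the \emph{two-sided} pointwise bound $c_1 J^{(1)}_\mu v^\mu \le (C_f J^{(1)}_\mu + J^{(2)}_\mu)v^\mu \le C_2 J^{(1)}_\mu v^\mu$ both for $v=n_{\Sigma_\tau}$ \emph{and} for $v=\partial_t$; the lower bound then forces the $\Sigma_{\tau_2}$, horizon and null-infinity fluxes to have a favourable sign outright, so there is no separate asymptotic analysis at $r\to\infty$ as you anticipate in your ``obstacles'' paragraph; (ii) your sentence ``$\mathbb{E}^{aux}[\psi](\tau)\lesssim \mathbb{E}^T[\psi](\tau)\le \mathbb{E}^T[\psi](\tau_1)$ \dots and $\mathbb{E}^{aux}[\psi](\tau_1)\ge 0$'' has the roles of $\tau_1$ and $\tau_2$ swapped---you need the upper bound on the $\Sigma_{\tau_1}$ term and the lower bound (nonnegativity) on the $\Sigma_{\tau_2}$ term.
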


\begin{proof}
We recall (\ref{Vpos}) holds in all four cases considered. We claim that we can choose a $C_f$ depending only on parameters and the $C$ in (\ref{fconditions}) such that both 
\begin{equation}
\begin{split} \label{cubo}
c_1 J^{(1)}_\mu n_{\Sigma_\tau}^\mu &\leq C_f J^{(1)}_\mu n_{\Sigma_\tau}^\mu + J^{(2)}_\mu n_{\Sigma_\tau}^\mu \phantom{xx} \leq C_2 J^{(1)}_\mu n_{\Sigma_\tau}^\mu \, , \\
c_1 J^{(1)}_\mu (\partial_t)^\mu &\leq C_f J^{(1)}_\mu (\partial_t)^\mu + J^{(2)}_\mu (\partial_t)^\mu \leq C_2 J^{(1)}_\mu (\partial_t)^\mu  \, , 
\end{split}
\end{equation}
hold for constants $c_1, C_2$ depending only on the parameters. Indeed, this follows easily from the form of the currents (\ref{Tcomponents}) and (\ref{Mcomponents}) using (\ref{fconditions}) and the bound (\ref{Vpos}). We next apply the divergence identity (\ref{divid}) for the current $C_f \cdot J^{(1)} + J^{(2)}$ over $\mathcal{M}\left(\tau_1,\tau_2\right)$. It follows that
\[
\mathbb{I} \left[\psi, f \right] (\tau_1,\tau_2) \leq C_f \mathbb{E}^{aux}[\psi](\tau_1) 
\]
as all boundary terms in the divergence identity except the one on $\Sigma_{\tau_1}$ have favourable signs. The right hand side is now immediately converted to $ \mathbb{E}^{T}[\psi](\tau_1)$ using the bounds (\ref{cubo}). 
\end{proof}

Note that all $f$'s in (\ref{multipl_schw})--(\ref{multipl_sds}) satisfy the assumption (\ref{fconditions})  hence (\ref{figo}) indeed holds. We next recall the form of (\ref{idt}), which is the left hand side of (\ref{figo}).

Denoting the spherical average of $\psi$ by~$	\psi_0:=\frac{1}{vol(\mathbb{S}^{n-2})}\int dvol_{\mathbb{S}^{n-2}} \psi$ and letting $g : [\rho_1,\rho_2) \rightarrow \mathbb{R}$ be a smooth function (with $g \cdot r$ uniformly bounded) to be chosen, we can write (\ref{idt}) as
\begin{equation}
	\begin{aligned}
		&	\mathbb{I} \left[\psi, f \right] (\tau_1,\tau_2) 	:= \\ 	&\int_{\mathcal{M}(\tau_1,\tau_2)}  dt dr^\star d\sigma_{\mathbb{S}^{n-2}} \Bigg( 2f^\prime |(\psi-\psi_0)^{\prime}|^2  - f \left(\frac{\xi(r)}{r^2}\right)^\prime |\nabla_{\mathbb{S}^{n-2}} (\psi-{\psi}_0)|^2 -\left(f (V+\mu\xi)^\prime +\frac{f^{\prime\prime\prime}}{2}\right)|\psi-{\psi}_0|^2  \Bigg) \nonumber \\
		 &	\qquad +\int_{\mathcal{M}(\tau_1,\tau_2)} dt dr^\star d\sigma_{\mathbb{S}^{n-2}}  \Bigg( 2f^\prime |(\psi_0)^{\prime}-g\psi_0 |^2-\left(2f^\prime g^2 -2(f^\prime g)^\prime -  f (V+\mu\xi)^\prime +\frac{f^{\prime\prime\prime}}{2}\right)|{\psi}_0|^2 \Bigg) \, . \nonumber 
	\end{aligned}
\end{equation}
Recall now the positivity property (\ref{fprimepos}) and observe that all $f$ in (\ref{multipl_schw})--(\ref{multipl_sds}) satisfy~$-f(r)\left(\frac{\xi(r)}{r^2}\right)^\prime\geq 0$. (Specifically, recall that~$f(r)$ and~$\left(\frac{\xi(r)}{r^2}\right)^\prime$ both change sign at trapping.) Therefore, denoting by $ \lambda^{min}_{\mathbb{S}^{n-2}}$ the smallest non-trivial eigenvalue of the (negative) Laplacian on the unit-sphere, $-\Delta_{\mathbb{S}^{n-2}}$, it becomes apparent that the reduced estimates associated with (\ref{eq:onlyT}), (\ref{eq:hds}) and (\ref{sdst}) will follow if $f$ obeys the following two conditions:
\begin{align}
- \lambda^{min}_{\mathbb{S}^{n-2}} f \left(\frac{\xi(r)}{r^2}\right)^\prime -\left(f (V+\mu\xi)^\prime +\frac{f^{\prime\prime\prime}}{2}\right) \geq c\frac{\xi}{r^3} \label{condition1} \, , 
\end{align}
holds for a constant $c$ depending only on parameters and given such an $f$, one can find a $g$ such that also
\begin{align}\label{condition2}
-\left(2f^\prime g^2 -2(f^\prime g)^\prime -  f (V+\mu\xi)^\prime +\frac{f^{\prime\prime\prime}}{2}\right) \geq c\frac{\xi}{r^4}
\end{align}
holds. Moreover, the reduced estimate associated with (\ref{erne}) will follow if (\ref{condition1}) and (\ref{condition2}) hold with $\xi$ replaced by $\left(1-\frac{M}{r}\right)\xi$ on both right hand sides. 

\begin{remark}\label{rem: sec: proofs, rem 1}
	Note that for solutions with vanishing spherical average, $\psi_0=0$, only (\ref{condition1}) has to hold. This will be the case in the proof of Theorem~\ref{theorem 2}. 
\end{remark}

\begin{remark}
In the proof of Theorem \ref{theorem 3} we will establish (\ref{condition2}) with $\xi^2$ instead of $\xi \left(1-\frac{M}{r}\right)$, which leads to a slightly weaker estimate than (\ref{erne}), which is however easily improved a posteriori. 
\end{remark}

We complete the proof of the reduced estimate for (\ref{eq:onlyT})--(\ref{sdst}) by checking (\ref{condition1}) and (\ref{condition2}) in each case.

\subsection{Proof of Theorem~\ref{theorem 1}}\label{subsec: sec: proofs, subsec 1}

We verify~\eqref{condition1} with
\begin{equation*}
	n=4,\quad \xi =1-\frac{2M}{r},\quad \mu=0,\quad \lambda^{min}_{\mathbb{S}^2} = 2
\end{equation*}
and
\begin{equation}\label{eq: subsec: sec: proofs, subsec 1, eq 3}
	f(r)= \left(1-\frac{3M}{r}\right)\sqrt{1+\frac{6M}{r}}, \ \ \ \ \  \frac{d f}{dr} =  \frac{27 M^2}{r^3 \sqrt{1+\frac{6 M}{r}}}>0.
\end{equation}
We compute
\begin{align}\label{eq: subsec: sec: proofs, subsec 1, eq 5}
	 &  	- \lambda^{min}_{\mathbb{S}^{2}} f \left(\frac{\xi(r)}{r^2}\right)^\prime -\left(f V^\prime +\frac{f^{\prime\prime\prime}}{2}\right)=2\frac{2 \sqrt{\frac{6 M}{r}+1} (r-3 M)^2 (r-2 M)}{r^6} \nonumber \\
	 &	\qquad\qquad\qquad+ \frac{M (r-2 M) \left(-54108 M^5+20628 M^4 r+5481 M^3 r^2-1182 M^2 r^3-176 M r^4+12 r^5\right)}{2 r^8 \sqrt{\frac{6 M}{r}+1} (6 M+r)^2} \nonumber \\
&\qquad\qquad =
\frac{(r-2 M) \left(-54108 M^6+36180 M^5 r+2889 M^4 r^2-3342 M^3 r^3-104 M^2 r^4+108 M r^5+8 r^6\right)}{2 r^8 \sqrt{\frac{6 M}{r}+1} (6 M+r)^2}.
\end{align}
It is easy to check that for~$r\geq 2M$ the polynomial
\begin{equation*}
	-54108 M^6+36180 M^5 r+2889 M^4 r^2-3342 M^3 r^3-104 M^2 r^4+108 M r^5+8 r^6
\end{equation*}
is strictly positive and hence (\ref{condition1}) holds.

To verify (\ref{condition2}) we choose $g(r)=-\frac{1}{2} \left(1-\frac{2M}{r}\right) \frac{1}{r} + \frac{1}{2} \frac{M^2}{r^3}$ and compute
\begin{equation}\label{eq: subsec: sec: proofs, subsec 1, eq 7}
	2\left( (f^\prime g)^\prime - f^\prime g^2 \right) -f V^\prime - \frac{1}{2} f^{\prime \prime \prime} =
\end{equation}
{\footnotesize
\begin{equation*}
	\begin{aligned}
		\frac{M \sqrt{\frac{6 M}{r}+1} (r-2 M) \left(-972 M^7+21060 M^6 r-16551 M^5 r^2+702 M^4 r^3+1215 M^3 r^4-48 M^2 r^5+13 M r^6+12 r^7\right)}{2 r^9 (6 M+r)^3}.
	\end{aligned} \nonumber
\end{equation*}
}

It is now straightforward to conclude that the polynomial
\begin{equation*}
	p(r)=-972 M^7+21060 M^6 r-16551 M^5 r^2+702 M^4 r^3+1215 M^3 r^4-48 M^2 r^5+13 M r^6+12 r^7
\end{equation*}
satisfies
\begin{equation*}
	p(r)\geq c r^7\ \ \ \ \textrm{for $r \geq 2M$.}
\end{equation*}
for some constant~$c$. Therefore, by simple asymptotic analysis we conclude that \eqref{condition2} holds.

\subsection{Proof of Theorem~\ref{theorem 1.1}}\label{subsec: sec: proofs, subsec 2}

We verify~\eqref{condition1} with
\begin{equation*}
n=5,\quad \xi =1-\frac{2M}{r^2},\quad \mu=0,\quad \lambda^{min}_{\mathbb{S}^3} = 3
\end{equation*}
and
\begin{equation}\label{eq: proof subsec: sec: proofs, subsec 2, eq 3}
	f(r)=  \frac{1}{r^2}\left(r-2 \sqrt{M}\right) \left(2 \sqrt{M}+r\right) ,\ \  \ \ \ \frac{df}{dr}= \frac{2 \sqrt{2}}{r^3}>0. 
\end{equation}
We compute
\begin{equation}
- \lambda^{min}_{\mathbb{S}^{3}} f \left(\frac{\xi(r)}{r^2}\right)^\prime -\left(f V^\prime +\frac{f^{\prime\prime\prime}}{2}\right)= \frac{\frac{1}{ r^2} \left(r^2-2M\right) \left(-848 M^3 + 756 M^2 r^2 - 180 M r^4 + 15 r^6\right)}{4 \sqrt{2} r^9}.
\end{equation}
It is easy to check that the polynomial
\begin{equation*}
	-848 M^3 + 756 M^2 r^2 - 180 M r^4 + 15 r^6
\end{equation*}
is strictly positive for~$r>\sqrt{2M}$ by rescaling~$M=1$ and by proving that 
\begin{equation*}
	15 r^6-180 r^4+756 r^2-848\geq 64
\end{equation*}
for~$r>\sqrt{2}$. We have concluded~\eqref{condition1}. 

To verify~\eqref{condition2} we choose~$g(r)=\frac{1}{2 r^3}-\frac{1-\frac{2 M}{r^2}}{2 r}$ and compute 
\begin{equation}
	 2\left( (f^\prime g)^\prime - f^\prime g^2 \right)-fV^\prime -\frac{1}{2}f^{\prime\prime\prime} = \frac{\sqrt{\frac{1}{M r^4}} \left(r^2-2M\right) \left(-152 M^3+132 M^2 r^2-28 M r^4+3 r^6\right)}{4 \sqrt{2} r^9}.
\end{equation}
It is now straightforward to conclude that the polynomial
\begin{equation}
	p(r)=-152 M^3+132 M^2 r^2-28 M r^4+3 r^6
\end{equation}
satisfies 
\begin{equation*}
 p(r)\geq c \cdot r^6,\qquad r \geq \sqrt{2M}
\end{equation*}
for some constant $c>0$. Therefore, by simple asymptotic analysis we conclude that \eqref{condition2} holds.

\subsection{Proof of Theorem~\ref{theorem 3}}\label{sec: proof of theorem 3}

We verify~\eqref{condition1} with
\begin{equation}
n=4,\quad \xi =\left(1-\frac{M}{r}\right)^2,\quad \mu=0 ,\quad \lambda^{min}_{\mathbb{S}^2} = 2
\end{equation}
and
\begin{equation}\label{eq: subsec: sec: proofs, subsec 6, eq 1.1}
	f(r)=\frac{(r-2 M) \sqrt{r (4 M+r)-4 M^2}}{r^2}, \ \ \ \ \ \frac{d f}{dr} = \frac{(4 M)^2 (r-M)}{r^3 \sqrt{-4 M^2+4 M r+r^2}}\geq 0 .
\end{equation}
We compute 
\begin{equation*}
	\begin{aligned}
		- \lambda^{min}_{\mathbb{S}^{2}} f \left(\frac{\xi(r)}{r^2}\right)^\prime -\left(f V^\prime +\frac{f^{\prime\prime\prime}}{2}\right) &	=
		\frac{2M^6(r-M)^3}{M^3 r^{11} \sqrt{-4 M^2+2 r (2 M+r)-r^2} \left(8 M^4+2 M^2 r^2-4 M^2 r (2 M+r)\right)^2} \times \\
		&	\times \Bigg(-1792 M^{10}+8960 M^9 r-17920 M^8 r^2+17920 M^7 r^3-8600 M^6 r^4+712 M^5 r^5\\
		&	\qquad\qquad  +1056 M^4 r^6-312 M^3 r^7-43 M^2 r^8+19 M r^9+2 r^{10}\Bigg).
	\end{aligned}
\end{equation*}
It is easy to check that the polynomial
\begin{equation*}
	\begin{aligned}
		&-1792 M^{10}+8960 M^9 r-17920 M^8 r^2+17920 M^7 r^3-8600 M^6 r^4+712 M^5 r^5+1056 M^4 r^6\\
		&	\qquad -312 M^3 r^7-43 M^2 r^8+19 M r^9+2 r^{10}.
	\end{aligned}
\end{equation*}
is strictly positive for~$r\geq M$ by rescaling~$M=1$ and by proving that
\begin{equation*}
	2 r^{10}+19 r^9-43 r^8-312 r^7+1056 r^6+712 r^5-8600 r^4+17920 r^3-17920 r^2+8960 r-1792\geq 2
\end{equation*}
for~$r\geq 1$. We have concluded~\eqref{condition1}.

To verify \eqref{condition2} we choose~$g(r)= -\frac{1}{2r}\left(1-\frac{M}{r}\right)^2+\frac{1}{2}\left(1-\frac{M}{r}\right)\frac{M^2}{r^3}$ 
and compute
\begin{equation}
 2\left( (f^\prime g)^\prime - f^\prime g^2 \right)-fV^\prime -\frac{1}{2}f^{\prime\prime\prime}
=
\frac{(r-M)^4}{2 r^{13} \left(-4 M^2+4 M r+r^2\right)^{5/2}}p(r) \, ,
\end{equation}
where
\begin{equation}
	\begin{aligned}
		p(r)&	=64 M^{10}-1216 M^9 r+4640 M^8 r^2-7200 M^7 r^3+4788 M^6 r^4\\
		&	\qquad\qquad -732 M^5 r^5-444 M^4 r^6+92 M^3 r^7+4 M^2 r^8+4 M r^9+3 r^{10}.
	\end{aligned}
\end{equation}
It is now straightforward to conclude that the polynomial~$p(r)$ satisfies 
\begin{equation}
	p(r) \geq c \cdot r^{10},\qquad r \geq M
\end{equation}
for some $c>0$.

We concluded that the estimate \eqref{condition2} holds with $\xi^2$ instead of $\xi \left(1-\frac{M}{r}\right)$ on the right hand side and leads to \eqref{erne} except that the zeroth order term has an additional $(1-\frac{M}{r})$-degeneration at the horizon. Another Hardy type inequality using the good $(\partial_{r^\star} \psi)^2$-term easily removes this additional degeneration a posteriori and the proof is complete.

\subsection{Proof of Theorem~\ref{theorem 2}}\label{subsec: sec: proofs, subsec 3}	

Note that in the present proof we only verify~\eqref{condition1}, but not condition~\eqref{condition2}, see the relevant Remark~\ref{rem: sec: proofs, rem 1}. 

We verify~\eqref{condition1} with 
\begin{equation*}
	n=4,\quad \xi =1-\frac{2M}{r}-\frac{r^2}{L^2},\quad \mu=\frac{2}{L^2},\quad \lambda^{min}_{\mathbb{S}^2} = 2 ,
\end{equation*}
where~$L^2=\frac{3}{\Lambda}$ and
	\begin{equation*}
		f(r)= \frac{1}{\sqrt{1-27\frac{M^2}{L^2}}}\left(1-\frac{3M}{r}\right)\sqrt{1+\frac{6M}{r}} , \ \  \  \ \ \frac{d f}{dr} =  \frac{1}{\sqrt{1-27\frac{M^2}{L^2}}}\frac{27 M^2}{r^3 \sqrt{1+\frac{6 M}{r}}}>0.
	\end{equation*}
We compute 
	\begin{equation*}
	- \lambda^{min}_{\mathbb{S}^{2}} f \left(\frac{\xi(r)}{r^2}\right)^\prime -\left(f (V+\mu\xi)^\prime +\frac{f^{\prime\prime\prime}}{2}\right) = 	\frac{1}{2r^7(r+6M)^2\sqrt{1-27\frac{M^2}{L^2}}\sqrt{1+\frac{6M}{r}}} \xi S(r),
\end{equation*}
	where 
	\begin{equation}\label{eq: proof theorem 2, eq 2.1}
		\begin{aligned}
			S(r)	&	= -\frac{4 M \left(4212 M^4+702 M^3 r-189 M^2 r^2-39 M r^3+r^4\right)r^3}{L^2}\\
			&	\qquad +\frac{81 M^3 (3 M+2 r) r^6}{L^4}\\
			&	\qquad -54108 M^6+28404 M^5 r+4185 M^4 r^2-2262 M^3 r^3-140 M^2 r^4+60 M r^5+4 r^6.
		\end{aligned}
	\end{equation}

We note that 
\begin{equation*}
	0<\frac{M^2}{L^2}<\frac{1}{27},\qquad 2M<r_+<r<L  ,
\end{equation*}
where the last inequality is immediate from that~$\xi(2M)<0$,~$\xi(L)<0$.

Moreover, the polynomial  
	\begin{equation*}
		-54108 M^6+28404 M^5 r+4185 M^4 r^2-2262 M^3 r^3-140 M^2 r^4+60 M r^5+4 r^6
	\end{equation*}
	is strictly increasing and specifically
	\begin{equation*}
		-54108 M^6+28404 M^5 r+4185 M^4 r^2-2262 M^3 r^3-140 M^2 r^4+60 M r^5+4 r^6\geq 1280M^6,
	\end{equation*}
	where~$1280 M^6$ is its value at~$r=2M$. For simplicity, we denote
	\begin{equation*}
		1-\frac{2M}{r}-\frac{r^2}{L^2}=\xi(r) >0 \implies r^3 = L^2 (r-2M-\xi(r) r) \, .
	\end{equation*}
	We rewrite~$S(r)$ as follows 
	\begin{equation}\label{eq: proof theorem 2, eq 3}
	\begin{aligned}
		S(r)	&	=	\xi^2(r) \left(243 M^4 r^2+162 M^3 r^3\right)\\
		&	\qquad +\xi(r) \left(17820 M^5 r+2970 M^4 r^2-1080 M^3 r^3-156 M^2 r^4+4 M r^5\right)\\
		&	\qquad  + 4 (3 M-r)^2 (6 M+r)^2 \left(r^2-15 M^2+8 M r\right),
	\end{aligned}
\end{equation}
where we note that~$r^2+8Mr-15M^2>0$ for~$r\in [r_+,\bar{r}_+]$.

Now, suppose that there exists~$r_0\in [r_+,\bar{r}_+]$ such that
\begin{equation}\label{eq: proof theorem 2, eq 4}
	 \left(17820 M^5 r+2970 M^4 r_0^2-1080 M^3 r_0^3-156 M^2 r_0^4+4 M r_0^5\right)<0,
\end{equation}
otherwise~$S(r)>0$. Then, by using that~$\xi(r_0)<1$ we bound~$S(r_0)$ from below as follows 
\begin{equation*}
	\begin{aligned}
		S(r_0)	&	>  \left(17820 M^5 r+2970 M^4 r_0^2-1080 M^3 r_0^3-156 M^2 r_0^4+4 M r_0^5\right)\\
		&	\qquad  + 4 (3 M-r_0)^2 (6 M+r_0)^2 \left(r_0^2-15 M^2+8 M r_0\right)\\
		&	=	-19440 M^6+34668 M^5 r_0+2430 M^4 r_0^2-2736 M^3 r_0^3-132 M^2 r_0^4+60 M r_0^5+4 r_0^6.
	\end{aligned}
\end{equation*}
A direct critical point analysis of the polynomial
\begin{equation*}
	-19440 M^6+34668 M^5 r+2430 M^4 r^2-2736 M^3 r^3-132 M^2 r^4+60 M r^5+4 r^6
\end{equation*}	
proves that for~$r>2M$ we obtain that 
\begin{equation*}
		-19440 M^6+34668 M^5 r+2430 M^4 r^2-2736 M^3 r^3-132 M^2 r^4+60 M r^5+4 r^6>20000M^6,
\end{equation*}	
	which proves that even for~$r_0$ values such that~\eqref{eq: proof theorem 2, eq 4} holds we obtain that~$S(r_0)>0$. 
	
We have concluded~\eqref{condition1} and the proof.

\section{Normal hyperbolicity and commutator vector fields} \label{sec:nohyp}

In this section, we elucidate the connection between the multipliers of the present paper and the globally good commutators introduced in~\cite{gustav,mavrogiannis}. For this purpose, we continue our discussion on normal hyperbolicity from Section \ref{sec:constructionprinciple} using the language of dynamical systems, but restricting now to the Schwarzschild spacetime. This is merely for simplicity of the expressions. Identical considerations apply -- with trivial algebraic changes -- to any of the spacetimes $(1)$--$(4)$. In this framework, we will show that the globally good commutators introduced in~\cite{gustav,mavrogiannis} arise by studying the expansion and contraction properties of the radial geodesic flow.

\subsection{Characterisation of the future/past-trapped sets}

 We first summarise our discussion in Section \ref{sec:constructionprinciple} by the following proposition.

\begin{proposition} \label{prop:gf}
	The sets of future-trapped and past-trapped geodesics in Schwarzschild are analytic codimension one submanifolds of phase space $\mathcal{P}$ given by
	\begin{equation*}
		W^+=\Big\{(x,p)\in  \mathcal{P}:\dfrac{l}{E}=3\sqrt{3}M,~ \frac{p^{r}}{E}=\Big(1+\dfrac{6M}{r}\Big)^{\frac{1}{2}}\Big(1-\dfrac{3M}{r}\Big) \Big\},
		\end{equation*}
		and
		\begin{equation*}
		W^-=\Big\{ (x,p)\in \mathcal{P}:\dfrac{l}{E}=3\sqrt{3}M,~  \frac{p^{r}}{E}=\Big(1+\dfrac{6M}{r}\Big)^{\frac{1}{2}}\Big(\dfrac{3M}{r}-1\Big)\Big\}.\end{equation*}
respectively. In particular, the intersection $W^+\cap W^-$ is equal to the trapped set $\Gamma$.	
\end{proposition}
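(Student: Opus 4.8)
The plan is to identify $W^{+}$ and $W^{-}$ with the stable and unstable manifolds of the hyperbolic fixed point $(r_{trap},0)=(3M,0)$ of the radial flow~\eqref{radgeoeqn}, and then to read off the explicit formulas from the two conserved quantities $E,l$ together with the null relation~\eqref{identity_particle_energy_angular_momentum}. Throughout one works on $\mathcal{P}$, where $E=\xi p^{t}>0$ and $l=r\sqrt{g_{AB}p^{A}p^{B}}\ge 0$ are conserved and $E^{2}=(p^{r})^{2}+\mathcal{V}_{l}(r)$ with $\mathcal{V}_{l}(r)=\frac{l^{2}}{r^{2}}\xi(r)$; for Schwarzschild $\mathcal{V}_{l}$ has, for $l>0$, a single nondegenerate local maximum, at $r=3M$, where $\xi(3M)=\tfrac13$.

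First I would pin down the values of the conserved quantities on the trapped sets. If $\gamma$ is future-trapped, so $(r,p^{r})(s)\to(3M,0)$ as $s\to\infty$, then, $l/E$ being constant along $\gamma$, passing to the limit in $E^{2}=(p^{r})^{2}+\mathcal{V}_{l}(r)$ gives $E^{2}=\mathcal{V}_{l}(3M)=\frac{l^{2}}{27M^{2}}$, i.e.\ $l/E=3\sqrt3\,M$ on all of $\gamma$; the identical computation with $s\to-\infty$ handles past-trapped geodesics. Substituting $l^{2}=27M^{2}E^{2}$ back into the null relation then gives, on $\mathcal{P}\cap\{l/E=3\sqrt3\,M\}$,
\[
\Big(\frac{p^{r}}{E}\Big)^{2}=1-\frac{27M^{2}\xi(r)}{r^{2}}=\Big(1-\frac{3M}{r}\Big)^{2}\Big(1+\frac{6M}{r}\Big)=f(r)^{2},
\]
using the factorization $r^{3}-27M^{2}r+54M^{3}=(r-3M)^{2}(r+6M)$. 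Hence $p^{r}/E=\pm f(r)$, and both trapped sets lie in the union of the two sheets named in the statement.

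Next I would verify that each sheet $W_{\varepsilon}:=\{(x,p)\in\mathcal{P}:p^{r}=\varepsilon\,\xi(r)p^{t}f(r)\}$, $\varepsilon\in\{\pm1\}$, is an analytic codimension-one submanifold of $\mathcal{P}$: its defining function $F:=p^{r}-\varepsilon\xi p^{t}f(r)$ is analytic (since $1+\frac{6M}{r}>0$ makes $f$ analytic on $(2M,\infty)$), and $0$ is a regular value because, writing $G:=g(p,p)$ for the function cutting out $\mathcal{P}$, the angular-momentum fiber components of $dG$ are nonzero on $W_{\varepsilon}$ — since $l/E=3\sqrt3\,M$ there forces $l>0$ — while those of $dF$ vanish, so $dF$ is never proportional to $dG$ along $\mathcal{P}$. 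Note also that on $\mathcal{P}$ the single equation $p^{r}/E=\varepsilon f(r)$ already forces $l/E=3\sqrt3\,M$, so $W_{+1},W_{-1}$ are precisely the two sets written in the statement. Then I would run the two-sided dynamical argument: along the flow $(p^{r}/E)^{2}=f(r)^{2}$ holds pointwise, so a trajectory stays on a single sheet unless it meets $r=3M$, where $(p^{r}/E)^{2}=f(3M)^{2}=0$ forces $p^{r}=0$ and hence reduces $\gamma$ to the fixed point; thus each $W_{\varepsilon}$ is flow-invariant. Substituting $p^{r}=\varepsilon Ef(r)$ into $\dot r=p^{r}$ and using $f'(3M)>0$ (equivalently, the hyperbolicity of $(3M,0)$) shows that along the trajectories on one sheet $r\to3M$ exponentially as $s\to+\infty$, and on the other as $s\to-\infty$, with $r$ monotone and never attaining $3M$; comparing the sign of $p^{r}$ on either side of the photon sphere with that of $f$ then identifies $W^{+}$ (future-trapped) and $W^{-}$ (past-trapped) with the two sheets, the same linearization giving the converse inclusions so that the identifications are equalities. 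Conceptually, this submanifold-plus-dynamics package is exactly what the analytic Hadamard--Perron theorem delivers for the hyperbolic fixed point $(3M,0)$, and one could invoke it in place of the explicit estimates. Finally, $W^{+}\cap W^{-}$ is the locus on which additionally $f(r)=0$, forcing $r=3M$ (the unique zero of $f$ in $(2M,\infty)$) and then $p^{r}=0$; conversely on $\mathcal{P}$ the conditions $r=3M$, $p^{r}=0$ automatically yield $l/E=3\sqrt3\,M$ from the null relation, so $W^{+}\cap W^{-}=\Gamma$.

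The main obstacle is the two-sided dynamical equivalence in the third step — upgrading ``the conserved quantities attain the critical value $l/E=3\sqrt3\,M$'' to ``$\gamma$ is future- (resp.\ past-) trapped''. This needs both that a trajectory confined to a signed sheet cannot jump onto the other sheet (which uses that, off the fixed point, $r=3M$ is approached only asymptotically and never attained) and that such a trajectory genuinely converges to $(3M,0)$ in the correct affine-time direction; both rest on $f'(3M)>0$, i.e.\ on $(3M,0)$ being a nondegenerate hyperbolic fixed point rather than a degenerate one. The remaining ingredients — the factorization of $f^{2}$, the regularity of the defining equation, and the description of the intersection — are routine.
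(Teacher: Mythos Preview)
Your proposal is correct and follows essentially the same route as the paper. The paper does not give a self-contained proof of the proposition; it simply declares the statement a ``summary'' of the discussion in Section~\ref{sec:constructionprinciple}, where the Hadamard--Perron theorem is invoked for the existence of stable and unstable manifolds and the relations \eqref{trarel}--\eqref{fde} are derived from the conservation of $E,l$ and the null identity~\eqref{identity_particle_energy_angular_momentum}. Your argument reproduces exactly these algebraic steps (the limit $s\to\pm\infty$ giving $l/E=3\sqrt{3}M$, then the factorisation $(p^r/E)^2=f(r)^2$), and supplements them with details the paper leaves implicit: the regular-value check for the codimension-one analytic submanifold claim, the flow-invariance of each sheet, the elementary ODE argument (via $f'(3M)>0$) for the converse inclusion, and the explicit verification that $W^+\cap W^-=\Gamma$. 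You also correctly observe that the analytic Hadamard--Perron theorem could replace your hands-on convergence estimate, which is precisely the shortcut the paper takes.
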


We will now write the sets $W^{\pm}$ of future-trapped and past-trapped geodesics in terms of specific defining functions $\varphi_{\pm}$. 
Motivated by \eqref{quotient_conserved_quantities}, we set the functions $\varphi_{\pm}\colon \mathcal{P}\to \mathbb{R}$ given by
\begin{equation*}
\varphi_{\pm}(x,p):=\frac{r^\frac{3}{2}}{(r-2M)^{\frac{1}{2}}}\Big(1+\frac{6M}{r}\Big)^{\frac{1}{2}}\Big(1-\frac{3M}{r}\Big)\pm\frac{r^\frac{3}{2}}{(r-2M)^{\frac{1}{2}}}\Big(\frac{p^r}{E}\Big).
\end{equation*}
Note that $\frac{l^2}{E^2}-27M^2=\varphi_{+}\varphi_{-}$ by \eqref{quotient_conserved_quantities}. In terms of the defining functions $\varphi_{\pm}$, the sets of future-trapped and past-trapped geodesics $W^{\pm}$ can then be written as 
\begin{equation*}
W^+=\Big\{ (x,p)\in \mathcal{P}:\varphi_{-}(x,p)=0\Big\},\qquad W^-=\Big\{ (x,p)\in \mathcal{P}:\varphi_{+}(x,p)=0\Big\}.
\end{equation*}

\subsection{Normal hyperbolicity of the trapped set}

The trapped set on Schwarzschild is \emph{eventually absolutely $r$-normally hyperbolic} for every $r$. See \cite[Section 1, Definition 4]{HPS77} for a precise definition. This property was first proven in \cite{WZ11}. In terms of the defining functions $\varphi_{\pm}$ of the stable manifolds $W^{\pm}$, the normal hyperbolicity of the geodesic flow is expressed in the following proposition.

\begin{proposition}[Expansion/contraction of the radial flow]\label{prop_normal_hyp}
The derivative of $\varphi_{\pm}$ along the geodesic flow in Schwarzschild is 
\begin{equation}\label{exp_contr_flow}
\frac{d}{ds}\varphi_{\pm}(x,p)=\pm \frac{1}{(1-\frac{2M}{r})r(1+\frac{6M}{r})^{\frac{1}{2}}}E\varphi_{\pm}(x,p).
\end{equation}
\end{proposition}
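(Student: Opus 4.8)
The plan is to verify \eqref{exp_contr_flow} by a direct computation along the radial geodesic flow \eqref{radgeoeqn}, exploiting the conservation of $E$ and $l$ and the algebraic identity $\frac{l^2}{E^2}-27M^2=\varphi_+\varphi_-$ recorded just above. First I would note that since $E$ is conserved along the flow, it suffices to differentiate the normalised quantity $\varphi_\pm/E$, which is a function of $r$ and $p^r/E$ alone; writing $u:=p^r/E$ and abbreviating the purely radial factor $h(r):=\frac{r^{3/2}}{(r-2M)^{1/2}}$ and $F(r):=(1+\tfrac{6M}{r})^{1/2}(1-\tfrac{3M}{r})$, we have $\varphi_\pm/E = h(r)\bigl(F(r)\pm u\bigr)$ and, recalling \eqref{fde}, $F=f$ is exactly the trapped-geodesic radial momentum profile.

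The key steps, in order: (i) compute $\frac{d}{ds}r = p^r = Eu$ from \eqref{radgeoeqn}; (ii) compute $\frac{d}{ds}u$ using $\frac{dp^r}{ds}=\partial_r\mathcal{V}_l(r)$ with $\mathcal{V}_l=\tfrac{l^2}{r^2}\xi$ and the constancy of $l/E$, so that $\frac{du}{ds} = \frac{1}{E}\partial_r\mathcal{V}_l = \frac{l^2}{E^2}\,\partial_r\!\bigl(\tfrac{\xi}{r^2}\bigr)$; (iii) assemble $\frac{d}{ds}(\varphi_\pm/E) = Eu\,\bigl[h'(F\pm u) + h\,F'\bigr] \pm h\,\frac{l^2}{E^2}\partial_r(\tfrac{\xi}{r^2})$; (iv) substitute $\frac{l^2}{E^2}=27M^2+\varphi_+\varphi_-/E^2$ from the identity above and use \eqref{identity_particle_energy_angular_momentum} in the form $u^2 = 1-\frac{\xi}{r^2}\frac{l^2}{E^2}$ to eliminate $u^2$; (v) collect terms and check that everything except a multiple of $\varphi_\pm$ cancels, with the surviving coefficient being $\pm\frac{E}{(1-\frac{2M}{r})r(1+\frac{6M}{r})^{1/2}}$. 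Concretely, one uses that $F$ solves the ODE whose defining property is \eqref{fde}: $F^2 = 1 - \frac{\xi}{r^2}\cdot 27M^2$, hence $2FF' = -27M^2\,\partial_r(\tfrac{\xi}{r^2})$, which is the algebraic relation that forces the cancellation of the terms not proportional to $\varphi_\pm$.

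The main obstacle is purely computational bookkeeping: one must be careful that the factors of $h$, $h'$, $F$, $F'$ and $\partial_r(\xi/r^2)$ recombine correctly, and in particular that the terms quadratic in $u$ and the ``cross'' terms reorganise into $h$ times $(F\pm u)=\varphi_\pm/(Eh)$ after inserting $u^2 = 1-\frac{\xi}{r^2}\frac{l^2}{E^2}$ and $\frac{l^2}{E^2}=27M^2+\varphi_+\varphi_-/E^2$. A clean way to organise this, which I would adopt, is to factor out $h$ throughout and show
\[
\frac{d}{ds}\frac{\varphi_\pm}{E} \;=\; h(r)\,\Bigl[\,Eu\,\tfrac{h'}{h}(F\pm u) + Eu\,F' \pm \tfrac{l^2}{E^2}\partial_r(\tfrac{\xi}{r^2})\,\Bigr],
\]
then use $2FF'=-27M^2\partial_r(\tfrac{\xi}{r^2})$ and $\varphi_+\varphi_-=E^2(\tfrac{l^2}{E^2}-27M^2)$ to rewrite the bracket as a single multiple of $(F\pm u)$; the computation of $h'/h = \frac{3}{2r}-\frac{1}{2(r-2M)}$ and of $F'$ is elementary, and the final coefficient simplifies to $\pm\frac{1}{(1-\frac{2M}{r})r(1+\frac{6M}{r})^{1/2}}$ after substituting $\xi=1-\tfrac{2M}{r}$. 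Multiplying back by $E$ and restoring $\varphi_\pm = E h(F\pm u)$ yields \eqref{exp_contr_flow}. Identical algebra with $\xi$, $\lambda_{\mathbb{S}^{n-2}}^{min}$ and $r_{trap}$ in place of the Schwarzschild data handles the other spacetimes $(1)$--$(4)$.
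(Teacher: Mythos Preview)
Your approach is exactly what the paper does: the paper simply states that the proposition ``follows by elementary calculations using the radial geodesic equation \eqref{radgeoeqn}'', and your proposal spells out precisely those calculations. One minor bookkeeping slip to fix when you carry it out: as defined in the paper, $\varphi_\pm$ already equals $h(F\pm u)$ (it is degree-zero homogeneous in $p$), so there is no need to divide by $E$; correspondingly, step (ii) should read $\tfrac{du}{ds}=E\cdot\tfrac{l^2}{E^2}\,\partial_r(\xi/r^2)$, after which a uniform factor of $E$ pulls out of your displayed bracket and the remainder of the outline goes through verbatim.
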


Proposition \ref{prop_normal_hyp} follows by elementary calculations using the radial geodesic equation \eqref{radgeoeqn}.\\

The radial geodesic flow is hyperbolic in any bounded region by the expansion/contraction properties in \eqref{exp_contr_flow}. In particular, the hyperbolicity of the radial flow induces suitable stable and unstable invariant distributions on $\mathcal{P}$.\footnote{Recall that a \emph{distribution} in phase space $\mathcal{P}$ is a map $(x,p)\mapsto \Delta_{(x,p)}\subseteq T_{(x,p)}\mathcal{P}$ where $\Delta_{(x,p)}$ are vector subspaces satisfying suitable regularity conditions.} 
Here, by an invariant distribution we mean a distribution invariant by the action of the differential of the geodesic flow in $T\mathcal{P}$. For the study of the differential of the geodesic flow, we consider the normals $V_{\pm}$ of the sets $W^{\pm}$. Set the vector fields 
\begin{equation}\label{vector_field_future_trap}
	V_+:=\dfrac{r^{\frac{3}{2}}}{(r-2M)^{\frac{1}{2}}}\Big(\Big(1+\dfrac{6M}{r}\Big)^{\frac{1}{2}}\Big(\dfrac{3M}{r}-1\Big)\partial_t+\partial_{r_*}\Big)\quad \textrm{such that} \quad g(V_+,p)=\varphi_-,
\end{equation}
and
\begin{equation}\label{vector_field_past_trap}
	V_-:=\dfrac{r^{\frac{3}{2}}}{(r-2M)^{\frac{1}{2}}}\Big(\Big(1+\dfrac{6M}{r}\Big)^{\frac{1}{2}}\Big(1-\dfrac{3M}{r}\Big)\partial_t+\partial_{r_*}\Big)\quad \textrm{such that} \quad g(V_-,p)=\varphi_+,
\end{equation}
on the stable manifolds $W^{\pm}$, respectively. By considering the vector fields \eqref{vector_field_future_trap}--\eqref{vector_field_past_trap} in a suitable moving frame, one can show the expansion/contraction properties of the differential of the flow map on suitable distributions of $\mathcal{P}$. A detailed discussion about the differential of the geodesic flow is beyond the scope of this paper. See \cite{V23} for further details.

\begin{remark}
Our discussion has so far been restricted to the geodesic flow in Schwarzschild spacetime where all sets and vectorfields can be parametrised explicitly. However, the above concepts apply in much greater generality. By the stable manifold theorem for normally hyperbolic sets by Hirsch, Pugh, and Shub \cite{HPS77}, a generalisation of the Hadamard--Perron theorem, the normal hyperbolicity of a trapped set $\Gamma$ implies the existence of suitable stable and unstable manifolds $W^\pm$ (locally given as the zero set of functions $\varphi_\pm$) in phase space $\mathcal{P}$. The vectorfields $V_\pm$ are then defined abstractly by $g(V_\pm,p)=\varphi_\mp$. 
\end{remark}

\subsection{Commutator vector fields for the wave operator}

Remarkably, the vector field \eqref{vector_field_past_trap} satisfies good commuting properties with the wave operator $\Box_{g} $ on Schwarzschild spacetime as shown in \cite{gustav,mavrogiannis}. The commutation of $V_-$ with the wave operator, corresponds in the high frequency regime to the Poisson bracket with the defining function $\varphi_{+}$ obtained in Proposition \ref{prop_normal_hyp}. For comparison, we state the commutation formulae with the wave operator obtained in \cite[equation (12)]{gustav} and \cite[Proposition 4.1]{mavrogiannis}.

\begin{proposition}\label{propcommSch}
	Let $\phi$ be a solution to the wave equation on a Schwarzschild black hole~$\mathrm{Schw}^{1+3}$. Then, the following holds
	\begin{equation}\label{eq: equation obeyed by Psi}
		\Box (V_-\phi)=\big[\Box,V_-\big]\phi=\frac{2}{(1-\frac{2M}{r})r(1+\frac{6M}{r})^{\frac{1}{2}}}\partial_tV_-\phi+ E_1(r)\partial_t\phi + E_2(r)\frac{1}{1-\frac{2M}{r}}\left(\partial_{r^\star}+f(r)\partial_t\right)\phi ,
	\end{equation}
	where 
	\begin{equation}
		\begin{aligned}
			E_1(r)  =-\frac{2(1-\frac{2M}{r})^{\frac{1}{2}}}{r}, \qquad 
			E_2(r)  = \frac{M^2}{r^3(1-\frac{2M}{r})^{\frac{1}{2}}},\qquad f(r)= \left(1-\frac{3M}{r}\right)\Big(1+\frac{6M}{r}\Big)^{\frac{1}{2}}.
		\end{aligned}
	\end{equation}
	\end{proposition}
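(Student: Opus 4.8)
The plan is to verify the commutation identity \eqref{eq: equation obeyed by Psi} by a direct computation, using the reduced form \eqref{renow} of the wave equation rather than the covariant operator, since that is where the algebra is cleanest. First I would recall that, by \eqref{vector_field_past_trap}, the vector field in question is
\[
V_- = \frac{r^{\frac{3}{2}}}{(r-2M)^{\frac{1}{2}}}\left(f(r)\partial_t + \partial_{r^\star}\right),
\]
with $f(r)=\left(1-\tfrac{3M}{r}\right)\left(1+\tfrac{6M}{r}\right)^{\frac{1}{2}}$, so that $V_-$ is (up to the radial weight $h(r):=r^{3/2}(r-2M)^{-1/2}$) precisely the null generator $\partial_{r^\star}+f\partial_t$ along which the radial flow contracts toward the photon sphere, as recorded in Proposition~\ref{prop_normal_hyp}. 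Since $\partial_t$ is Killing, $[\Box,V_-]\phi=\Box(V_-\phi)$ once $\Box_g\phi=0$, so it suffices to compute $\Box_g(V_-\phi)$ and re-express the result in terms of $\partial_t\phi$, $(\partial_{r^\star}+f\partial_t)\phi$, and $V_-\phi$ itself.

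The key steps, in order, are as follows. First I would conjugate by the radial rescaling $\phi\mapsto\psi=r\phi$ so that $\Box_g\phi=0$ becomes the flat-looking equation \eqref{renow} with potential $V(r)$; it is convenient to write the operator as $P:=-\partial_t^2+\partial_{r^\star}^2-V(r)$ acting on $\psi$. Second, I would compute $P$ applied to $h(r)(f\partial_t+\partial_{r^\star})\psi$ by expanding the commutators $[P,\partial_{r^\star}]$, $[P,\partial_t]=0$, and $[P, \text{multiplication by a radial function}]$; the only nontrivial pieces are $[\partial_{r^\star}^2, h f]$, $[\partial_{r^\star}^2,h]$ acting on $\partial_{r^\star}\psi$, and the terms where $\partial_{r^\star}$ hits the coefficients. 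This produces a sum of terms of the schematic form $(\text{radial})\cdot\partial_t\psi$, $(\text{radial})\cdot\partial_{r^\star}\psi$, $(\text{radial})\cdot\partial_{r^\star}\partial_t\psi$, $(\text{radial})\cdot\partial_{r^\star}^2\psi$, plus angular contributions; one uses the equation \eqref{renow} once more to trade the second-order terms $\partial_{r^\star}^2\psi$ and $\tfrac{\xi}{r^2}\Delta_{\mathbb{S}^2}\psi$ back for $\partial_t^2\psi$ and lower order. Third, I would organize the resulting first-order radial coefficients by matching against the three target vector fields: the coefficient of $\partial_t V_-\psi$ is forced to be $\tfrac{2}{(1-2M/r)\,r\,(1+6M/r)^{1/2}}$ — which is exactly (twice) the expansion rate appearing in \eqref{exp_contr_flow} — and the remainder must reassemble into $E_1(r)\partial_t\psi+E_2(r)(1-2M/r)^{-1}(\partial_{r^\star}+f\partial_t)\psi$ with $E_1,E_2$ as stated. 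Finally I would undo the $r$-rescaling to return from $\psi$ to $\phi$, checking that the weight factors $r^{\pm 1}$ distribute consistently across all four terms.

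I expect the main obstacle to be purely computational bookkeeping: the coefficient $h(r)=r^{3/2}(r-2M)^{-1/2}$ and the multiplier $f(r)$ both carry a square root $(1+6M/r)^{1/2}$, so every differentiation in $r^\star$ (equivalently $\xi\,\partial_r$) generates fractional-power expressions that must be simplified carefully, and the cancellation that isolates the clean $\partial_t V_-\phi$ term with coefficient matching the dynamical expansion rate is somewhat delicate. The conceptual content is, however, forced: the principal-symbol part of $[\Box,V_-]$ must reproduce the Poisson bracket $\{\,|p|^2_g\,,\,\varphi_+\}$, which by Proposition~\ref{prop_normal_hyp} equals $-\tfrac{1}{(1-2M/r)r(1+6M/r)^{1/2}}E\varphi_+$, i.e. (up to the $E$ versus $\partial_t$ dictionary) the first term on the right of \eqref{eq: equation obeyed by Psi}; the lower-order terms $E_1,E_2$ are then the ``quantum corrections'' that the symbol calculus does not see. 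So the strategy is simply to verify that the explicit computation respects this structure, and to read off $E_1$ and $E_2$ from it. One can alternatively cite \cite[equation (12)]{gustav} and \cite[Proposition 4.1]{mavrogiannis} directly, since the statement is exactly what is proven there; for completeness I would include the above verification as the proof.
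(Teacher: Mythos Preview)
The paper does not actually prove Proposition~\ref{propcommSch}: it is stated ``for comparison'' with an explicit attribution to \cite[equation~(12)]{gustav} and \cite[Proposition~4.1]{mavrogiannis}, and no computation is carried out. Your final sentence, noting that one can simply cite those references, is therefore exactly the paper's own ``proof''.

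Your proposed direct verification via the conjugated operator \eqref{renow} is a correct strategy and goes beyond what the paper does. The outline is sound: the principal-symbol part of $[\Box,V_-]$ is governed by the Poisson bracket computed in Proposition~\ref{prop_normal_hyp}, which forces the coefficient of $\partial_t V_-\phi$, and the subprincipal remainder then determines $E_1,E_2$. Two small points of care if you actually carry this out: (i) the operator $P$ on $\psi$ must include the angular term $\tfrac{\xi}{r^2}\Delta_{\mathbb{S}^2}$, not just $-\partial_t^2+\partial_{r^\star}^2-V$; you acknowledge angular contributions later but the initial definition of $P$ is incomplete; (ii) under $\psi=r\phi$ the vector field $V_-$ does not conjugate to $h(f\partial_t+\partial_{r^\star})$ acting on $\psi$ exactly, since $\partial_{r^\star}$ hits the factor $r^{-1}$; the extra zeroth-order piece must be tracked through the computation and cancelled at the end. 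Neither point is a genuine obstruction, and you flag the second one in your last step.
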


We recall that~\cite{gustav} obtains local integrated decay estimates for the perturbed wave equation $\Box_{g}\phi =\epsilon \beta^{a}\partial_a \phi$ on Schwarzschild by using Proposition \ref{propcommSch}. Here $\beta$ is a regular vector field suitably decaying in space. On the other hand, \cite{mavrogiannis} proves relatively non-degenerate integrated energy estimates for the wave equation on subextremal Schwarzschild--de Sitter spacetimes by using a suitable modification of Proposition \ref{propcommSch}. Extensions of Proposition \ref{propcommSch} have been obtained and exploited for the study of local integrated decay estimates on Kerr and Kerr--de Sitter spacetimes in \cite{mav2, gustav2}.

\section{Final remarks}
We finish the paper with a few remarks about local integrated decay estimates on other spherically symmetric black hole spacetimes.

\begin{remark}[On higher dimensional Schwarzschild]
	Let
	\begin{equation*}
		g = -\Big(1-\frac{2M}{r^{n-3}}\Big)dt \otimes dt +\Big(1-\frac{2M}{r^{n-3}}\Big)^{-1}dr \otimes dr +r^2 d\sigma_{\mathbb{S}^{n-2}},
	\end{equation*}
	where~$n\geq 4$, be the metric of an~$n$ dimensional Schwarzschild black hole. For more information about this specific geometry see~\cite{volker-HighSchwarzschild}. Motivated by the considerations of Section~\ref{sec:constructionprinciple}, we define the multiplier 
	\begin{equation*}
		\begin{aligned}
			f(r)=\frac{r-	 r_{trap}}{r^{\frac{n-1}{2}}}\sqrt{ \sum_{k=2}^{n-1} \frac{-(r-	 r_{trap})^{k-2}}{k!} \partial_{r}^k\Big|_{r=	 r_{trap}}\left(-(	 r_{trap})^{-2} r^{n-1}+2 M \left(	 r_{trap}\right)^{1-n} r^{n-1}-2 M+r^{n-3}\right)},
		\end{aligned}
	\end{equation*}
	where~$		 r_{trap} =(n-1)^{\frac{1}{n-3}} M^{\frac{1}{n-3}}$ is the value that defines the corresponding photon sphere. We expect that by using this $f$ in (\ref{figo}), one produces a coercive estimate also for $n\geq 6$. 
\end{remark}

\begin{remark}[On subextremal Reissner--Nordstr\"om]\label{rem: subsec: sec: main thms, subsec 3, rem 1}
Let
\begin{equation*}
	g =-\left(1-\frac{2M}{r}+\frac{Q^2}{r^2}\right)dt \otimes dt +\left(1-\frac{2M}{r}+\frac{Q^2}{r^2}\right)^{-1} dr \otimes dr  + r^2 d\sigma_{\mathbb{S}^2}
\end{equation*}
be the metric of a subextremal Reissner--Nordstr\"om black hole~$|Q|<M$. Motivated by the considerations of Section~\ref{sec:constructionprinciple}, we define the multiplier 
\begin{equation}\label{eq: rem: subsec: sec: main thms, subsec 3, rem 1, eq 1}
	\begin{aligned}
		f(r)		= \frac{1}{r}\Big(1-\dfrac{ r_{trap}}{r}\Big)\Big(r^2+2 r_{trap} r-\dfrac{ r_{trap}^2Q^2}{\Delta( r_{trap})}\Big)^{\frac{1}{2}}, \qquad \Delta(r)=r^2-2Mr+Q^2,
	\end{aligned}
\end{equation}
where
\begin{equation*}
		 r_{trap}= \frac{1}{2} \left( \sqrt{9M^2-8Q^2}+3M\right)
\end{equation*}
is the value that defines the corresponding photon sphere. The authors have verified for several subextremal parameters that the multiplier above produces coercive estimates for solutions with vanishing spherical mean. With more effort, one should be able to check the entire subextremal range. 
\end{remark}

\begin{remark}[On subextremal Reissner--Nordstr\"om--de~Sitter]\label{rem: subsec: sec: main thms, subsec 3, rem 2}
	Let
	\begin{equation*}
		g =-\left(1-\frac{2M}{r}+\frac{Q^2}{r^2}-\frac{r^2}{L^2}\right)dt\otimes dt +\left(1-\frac{2M}{r}+\frac{Q^2}{r^2}-\frac{r^2}{L^2}\right)^{-1} dr\otimes dr + r^2 d\sigma_{\mathbb{S}^2}
	\end{equation*}
	be the metric of a Reissner--Nordstr\"om--de~Sitter black hole, where~$L=\frac{3}{\Lambda}$ is defined in terms of the cosmological constant $\Lambda> 0$. To the best of our knowledge a Morawetz estimate for the wave equation on Reissner--Nordstr\"om--de~Sitter~$(\Lambda>0)$ has not been proved in the literature. It would be interesting to understand further the Morawetz estimate produced by the multiplier suggested from the considerations of Section~\ref{sec:constructionprinciple}, in a near extremal Reissner--Nordstr\"om--de Sitter black hole, where a violation of Strong Cosmic Censorship has been conjectured~\cite{Cardoso-Costa-Destounis-RNdS}. Specifically, it would be interesting to track down the constants in the Morawetz estimate and the relevant commutator that gives a `relatively non-degenerate estimate'. See~\cite{mavrogiannis} for the proof of a `relatively non-degenerate estimate' on Schwarzschild--de~Sitter. 
\end{remark}

\addtocontents{toc}{\protect\setcounter{tocdepth}{0}}

\addtocontents{toc}{\protect\setcounter{tocdepth}{1}}

\bibliographystyle{plain}
\bibliography{MyBibliography}

\end{document}